\documentclass{iacrcc}
\usepackage{graphicx} 
\usepackage{appendix}
\usepackage{amsmath}
\usepackage{amsthm}
\usepackage{amsfonts}
\usepackage{amssymb}
\usepackage{amsmath,mathtools}
\usepackage{paralist}
\usepackage{xcolor}
\usepackage{setspace} 
\usepackage{todonotes}
\usepackage{makecell}
\usepackage{siunitx}
\usepackage{multirow}


\def\q{{\mathfrak q}}
\def\fq{{\mathbb F}_{q}}
\def\f2{{\mathbb F}_{2}}

\def\fF{{\mathbb F}_q}

\newcommand{\F}[1]{{\mathbb F}_q^{#1}}

\def\sd{{\mathrm{sd}}}

\title{A new multivariate primitive from CCZ equivalence}
\addauthor[orcid = {0000-0002-6817-3421},
inst = {1},
email = {marco.calderini@unitn.it},
surname = {Calderini}]{Marco Calderini}
\addauthor[orcid = {0000-0001-5227-807X},
inst = {2},
email = {alessio.caminata@unige.it},
surname = {Caminata}]{Alessio Caminata}

\addauthor[orcid = {0000-0001-6381-4712},
inst = {1,2},
email = {irene.villa@unitn.it},
surname = {Villa}]{Irene Villa}
\addaffiliation{Dipartimento di Matematica,  Universit\`a degli studi di Trento\\ via Sommarive 14, 38123 Povo, Trento, Italy}

\addaffiliation{Dipartimento di Matematica, Universit\`a di Genova\\ via Dodecaneso 35, 16146, Genova, Italy}

\keywords{Post-quantum Cryptography, Multivariate Cryptography, Boolean functions, CCZ equivalence}

\begin{document}

\maketitle

\begin{abstract}
    Multivariate Cryptography is one of the candidates for Post-quantum Cryptography. 
    Multivariate schemes are usually constructed by applying two secret affine invertible transformations $\mathcal S,\mathcal T$ to a set of multivariate polynomials $\mathcal{F}$ (often quadratic). The polynomials  $\mathcal{F}$ possess a trapdoor that allows the legitimate user to find a solution of the corresponding system, while the public polynomials $\mathcal G=\mathcal S\circ\mathcal F\circ\mathcal T$ look like random polynomials. The polynomials $\mathcal G$ and $\mathcal F$ are said to be affine equivalent.
    In this article, we present a more general way of constructing a multivariate scheme by considering the CCZ equivalence, which has been introduced and studied in the context of vectorial Boolean functions. 
\end{abstract}
\section{Introduction}

In the last few decades, many new proposals for public-key cryptosystems have been presented to the scientific community. With the advent of \emph{Post-quantum Cryptography} \cite{BernsteinPQ} following the development of Shor's algorithm, many cryptographers have focused on finding quantum-resistant public-key systems. \emph{Multivariate public-key Cryptography} is one of the main families of post-quantum cryptosystems. These systems base their security on the difficulty of solving a set of randomly chosen nonlinear multivariate polynomials over a finite field. So far, there is no evidence that quantum computers can solve such sets of multivariate polynomials efficiently.

A multivariate public-key cryptosystem involves a public key comprising multivariate polynomials  $f^{(1)},\ldots,f^{(m)}$ in $\fF[z_1,\ldots,z_n]$, where $\fF$ is a finite field with $q$ elements. In order to keep the public key size not too large, usually, quadratic polynomials are considered.
The secret key is some information (about the construction of the polynomials $f^{(i)}$) which allows to solve the system $f^{(1)}=w_1,\ldots,f^{(m)}=w_m$ efficiently for some $w_1,\ldots,w_m\in\fF$.
To encrypt a message $(z_1',\ldots,z_n')\in\F n$, the sender computes $w_i=f^{(i)}(z_1',\ldots,z_n')$, for $i=1,\ldots,m$, and sends $(w_1,\ldots,w_m)$ to the receiver. With the secret key, the receiver can solve the system and recover the original message.

One of the main methods to achieve the previous scheme is the {\em Bipolar Construction}.
From a system $\mathcal F$ of $m$ equations in $n$ variables  (possibly quadratic) relatively easy to invert, the secret key is composed of $\mathcal F$ and two randomly chosen affine bijections $\mathcal S$ and $\mathcal T$; the public key is the system  $\mathcal G=\mathcal S\circ\mathcal F\circ\mathcal T$.
The obtained system $\mathcal G$ (still quadratic) is now assumed to be not easy to invert, since it should be hard to distinguish from a random system.
Notice that, in this case, finding a preimage of $(w_1,\ldots,w_m)$ reduces to finding a preimage for $\mathcal F$ of $\mathcal S^{-1}(w_1,\ldots,w_m)$ and then apply $\mathcal T^{-1}$.
Hence, the core idea of the Bipolar Construction is to hide the structure of the {\em central map} $\mathcal F$ by applying two random affine bijections to the input and to the output of $\mathcal F$. 
This corresponds to randomly taking a system in the affine-equivalence class of $\mathcal F$. 

The Bipolar Construction method has found extensive application in numerous significant multivariate schemes, including Matsumoto--Imai \cite{MI}, HFE \cite{HFE}, Oil and Vinegar \cite{Pat97}, and Rainbow \cite{rainbow}.
Unfortunately, the affine equivalence keeps many properties and structures of a system of equations. This might allow an attacker to use them to break the system. Consequently, public polynomials 
$\mathcal{G}$ often fail to exhibit true randomness, undermining the scheme's robustness.

In order to better explain this phenomenon, we briefly recall the example of the  Matsumoto--Imai (MI) Cryptosystem \cite{MI}.
In this scheme, the authors consider $q$ a power of two and $m=n$.
The central system $\mathcal F$ is seen as a quadratic function  $F:\mathbb E\rightarrow\mathbb E$, $F(z)=z^{q^j+1}$ where $\mathbb E=\mathbb{F}_{q^n}$ and $j$ is chosen such that  $\gcd(q^n-1,q^j+1)=1$. 
Recall that the degree of a monomial $z^k\in\mathbb E[z]$, $0\le k\le q^n-1$, is the 2-weight of $k$.
Then, by applying the standard isomorphism $\phi:\F n\rightarrow\mathbb E$, the map is
 expanded as a system of $n$ equations over $\fF$, $\mathcal F =\phi\circ F\circ\phi^{-1}:\F n\rightarrow\F n$.
 The quadratic function $F(z)=z^{q^j+1}$ is a bijection and it is easy to invert.
Clearly, after applying the affine bijections $\mathcal S$ and $\mathcal T$ to the input and to the output of $\mathcal F$, the system will not have a so-simple structure easy to invert.
However, the function $F$ presents a bilinear relation between its input and its output. Indeed, setting $w=F(z)$, we have the relation $w^{q^j}z=z^{q^{2j}}w$,
in which both variables $z$ and $w$ appear only with ``linear exponents''.
The affine transformations $\mathcal{S}$ and $\mathcal{T}$ preserve the presence of such linear relations between input and output, making systems susceptible to linearization attacks as first demonstrated in \cite{Pat95}. Modifications to the MI cryptosystem have been attempted to mitigate this vulnerability, as seen in \cite{Sflash,PMI}. However, even with these alterations, attacks have been developed that can reduce the modified MI system back to its original form, thus rendering it vulnerable to linearization attacks, as detailed in \cite{Sflashattack, FGS05,oygarden24}.
This exemplifies just one instance of a property that persists through an affine equivalence relation. 
Cryptographers have identified many such properties, which have been exploited in successful attacks on multivariate schemes, including the Min Rank attack documented in \cite{Bettale20131,Caminata2021163, GoubinCourtois, KS98, Vates2017272}.

In this paper, we propose to use a more general notion of equivalence relation between polynomial systems to obtain a public-key function $\mathcal{G}$ which does not inherit the ``simple'' structures of the secret function $\mathcal F$.
Specifically, our investigation focuses on the CCZ equivalence transformation which has been introduced by Carlet, Charpin, and Zinoviev \cite{CCZ}.
Given two functions $F,G:\F n\rightarrow \F m$ we say that they are  {\em CCZ equivalent}  if there exists an affine bijection $\mathcal A$ of $\F{n+m}$ such that $\mathcal G_G=\mathcal A(\mathcal G_F)$, where $\mathcal G_F$ and $\mathcal G_G$ are the graphs of $F$ and $G$ respectively.
This equivalence relation has been mainly studied in the context of cryptographic Boolean functions, since it keeps unchanged the value of the differential uniformity and the nonlinearity, two important properties to study when a function is used as a component of a block cipher. Budaghyan, Carlet, and Pott proved that CCZ equivalence is strictly more general than the affine equivalence by exhibiting functions which are
CCZ equivalent to $F(z) = z^3$ over $\mathbb{F}_{2^n}$, but not affine equivalent, and also not extended affine (EA) equivalent \cite{BCP2006}, where EA equivalence extends affine equivalence by adding any affine transformation.
Other works focused on further studying the relation between CCZ and EA equivalences, see for example \cite{CanPer,BCV20}.
 In this work, we propose constructing multivariate schemes in which the secret and public maps are CCZ equivalent but not necessarily affine or extended affine equivalent. While CCZ equivalence preserves the cryptographic properties mentioned above, these may not be relevant in this context. Instead, we focus on the fact that CCZ equivalence is more general than affine equivalence and does not preserve other key properties such as algebraic degree and bijectivity. The central idea of this paper is to investigate whether the relations usually used to break multivariate systems---typically preserved under affine transformations---are instead disrupted by CCZ transformations.

One of the main challenges in transitioning from affine equivalence to CCZ equivalence is that selecting a random element $G$ from the CCZ class of a given polynomial map $F$ is not straightforward. Not every affine bijection of $\F{n+m}$  maps the graph of one function into the graph of another, and the admissible affine bijections depend on the chosen function $F$. To address this challenge, we leverage a result by Canteaut and Perrin \cite{CanPer}, demonstrating that any two CCZ equivalent functions can be connected by applying two extended affine transformations and another map called a $t$-twist (see Definition~\ref{def:ttwist}). In Section~\ref{sec:proposal}, we provide a high-level explanation of how this strategy can be utilized to construct an encryption or a signature scheme.
This construction is quite general and applicable to any secret map 
$F$ that admits a $t$-twist. To provide a concrete example, we propose selecting a quadratic function $F$ derived from Oil and Vinegar (OV) polynomials \cite{Pat95}.
OV polynomials divide variables into oil and vinegar sets, with no quadratic terms in the oil variables, allowing for linear decryption/signing by assigning random values to the vinegar variables. The name comes from the fact that the variables do not truly mix, like oil and vinegar in the salad dressing.
We name our scheme \texttt{Pesto}, as the CCZ transformation ensures that the variables fully mix, resembling the mixing of ingredients in Pesto Sauce using mortar and pestle. Notably, while the secret polynomials are quadratic, the public polynomials have degree four.

We provide commentary on the proposal, specifically regarding the dimensions of the keys and the computational costs involved, in Section~\ref{sec:computationalrmks}. Discussion on potential vulnerabilities and attacks is reserved for the final section of the paper.
Given that the public polynomials have degree four, many typical attacks against multivariate schemes, which target degree two, are not immediately applicable. However, partial recovery of the affine transformation may still be possible, albeit at significant computational expense.

Although our primary objective is to connect the research areas of Boolean functions and Multivariate Cryptography rather than presenting a fully developed proposal, we also explore specific parameter choices for the system. These parameters are not intended for immediate implementation but rather serve as an invitation for further cryptanalysis of the scheme.
We believe this area offers significant opportunities for further research and development, and we warmly encourage additional contributions.

\subsubsection*{Structure of the paper}
This work is organized as follows.
In Section~\ref{sec:prelim} we recall the definition and some basic facts about the CCZ equivalence.
In Section~\ref{sec:twisting}, we present the \emph{twisting} and explain how it can be used to produce a random CCZ transformation.
In Section~\ref{sec:proposal} we present a proposal for a multivariate cryptographic scheme obtained by hiding the central map with a CCZ transformation. We present it at a high level of generality in Subsection~\ref{subsec:genproposal}, and then with more specifics on the functions in Subsection~\ref{subsec:ourproposal}, i.e., the \texttt{Pesto} scheme.
Section~\ref{sec:computationalrmks} presents some comments on the form of the constructed maps, an analysis on the sizes of the keys and on the computational cost of applying the proposed procedure.
In the last section, we present a preliminary analysis on the security of the scheme and propose a possible choice of secure parameters.

\subsubsection*{Acknowledgments}
A.~Caminata is supported by the Italian PRIN2022 grant P2022J4HRR ``Mathematical Primitives for Post Quantum Digital Signatures'', by the MUR Excellence Department Project awarded to Dipartimento di Matematica, Università di Genova, CUP D33C23001110001, and by the European Union within the program NextGenerationEU. Additionally, part of the work was done while Caminata was visiting the Institute of Mathematics of the University of Barcelona (IMUB). He gratefully appreciates their hospitality during his visit.\\
The research of M.~Calderini and I.~Villa was partially supported by the Italian Ministry of University and Research with the project PRIN 2022RFAZCJ.\\
The three authors are members  of the INdAM Research Group GNSAGA.\\
We sincerely thank the anonymous referees for their insightful comments and for encouraging us to propose specific parameters for the scheme.


\section{Preliminaries}\label{sec:prelim}
In this section, we recall the definition of CCZ equivalence, introduced in \cite{CCZ}, together with some preliminary results and notations.

\subsection{CCZ equivalence}
Let $n,m$ be positive integers, $q$ a prime power and $\fF$ a finite field with $q$ elements. We consider a function $F:\F n\rightarrow \F m$.
Notice that $F$ can be seen as
$F=(f_1,\ldots,f_m)$
with $f_i:\F n\rightarrow\fF$ for $1\le i\le m$.
We call  \emph{the coordinate}, or the \emph{the $i$-th coordinate}, of $F$ the function $f_i$.
For $\lambda=(\lambda_1,\ldots,\lambda_m)\in\F m$ we call the \emph{$\lambda$-component} of $F$ the function $F_\lambda=\lambda\cdot F=\lambda_1f_1+\cdots+\lambda_mf_m$.
To represent $F$ we can use the {\em algebraic normal form} (ANF), that is, we can represent the function as a multivariate polynomial over $\F m$:
$$F(z)=F(z_1,\ldots,z_n)=
\sum_{u\in\mathbb{N}^n}a_uz_1^{u_1}\cdots z_n^{u_n}, \qquad \mbox{ with }a_u\in\F m.$$
Moreover, in order to have a unique representative for $F$ we adopt the standard convention that $u_1,\dots,u_n<q$.
We say that $z_1^{u_1}\cdots z_n^{u_n}$ is a \emph{term} of $F$ if $a_u\ne0$.
The {\em algebraic degree} of $F$ is $\deg(F)=\max\{\sum_{i=1}^nu_i\, :\, u\in\mathbb{N}^n;\, a_u\ne0\}$.
We call $F$ {\em linear} if $\deg(F)=1$ and $F(0)=0$, {\em affine} if $\deg(F)= 1$, {\em quadratic} if $\deg(F)=2$.
When $m=n$, we say that $F:\F n\rightarrow\F n$ is  a {\em bijection} or that  it  is {\em invertible} if $F$ induces a permutation over $\F n$ , i.e., \ if $\{F(v) : v \in\F n\}=\F n$.\\

\begin{definition}\label{def:eq}
Let $F,G:\F n\rightarrow \F m$ be two functions.
    \begin{itemize}
        \item $F$ is {\em affine equivalent} to $G$ if there are two affine bijections $A_1,A_2$ of $\F m$ and $\F n$ respectively such that
        $G=A_1\circ F\circ A_2$.
        \item $F$ is {\em EA equivalent} (extended affine) to $G$ if there are two affine bijections $A_1,A_2$ of $\F m$ and $\F n$ respectively and an affine transformation $A:\F n\rightarrow\F m$ such that
        $G=A_1\circ F\circ A_2+A$.
        \item $F$ is {\em CCZ equivalent} to $G$ if there exists an affine bijection $\mathcal A$ of $\F{n+m}$ such that
        $\mathcal G_G=\mathcal A(\mathcal G_F)$, where $\mathcal G_F=\{(v,F(v)) : v\in\F n\}\subseteq\F n\times\F m$ is the graph of $F$, and  $\mathcal G_G$ is the graph of $G$.
    \end{itemize}
\end{definition}
Clearly, affine equivalence is a particular case of EA equivalence.
Moreover, EA equivalence is a particular case of CCZ equivalence (see \cite{CCZ, Book}).
Notice that a CCZ transformation might change the algebraic degree of a function and also its bijectivity, whereas both notions are preserved by the affine equivalence and, when the function is not affine, the algebraic degree is also preserved by the EA equivalence.

In the usual setup of multivariate schemes, the central (secret) map  $F$ is composed with two randomly chosen affine bijections $A_1,A_2$ of $\F m$ and $\F n$ respectively to obtain the public map  $G=A_1\circ F\circ A_2$. Thus, the secret and public key $F$ and $G$ are affine equivalent.
Modifying this construction by using EA equivalence does not provide any improvement. 
Indeed, the difference between EA and affine equivalence is just the addition of an affine transformation. So, most of the attacks that can be performed over schemes using an affine transformation can be easily extended to the case of EA transformation.
Therefore, our investigation will focus on the CCZ transformation. Unfortunately, given a function $F$, it seems not so easy to obtain a random CCZ equivalent function $G$ as we are going to explain next.

\subsection{Towards a random CCZ construction}
Let $F$ and $G$ be two CCZ equivalent functions as in Definition \ref{def:eq}.
We can write the affine bijection $\mathcal A:\F n\times\F m\rightarrow\F n\times\F m$ as
$$\mathcal A(z,w)=\mathcal L (z,w)+(a,b),$$
with $a\in\F n$ and $b\in\F m$ and $\mathcal{L}:\F n\times\F m\rightarrow\F n\times\F m$ linear bijection.
Thus,  $\mathcal L$ maps the graph of $F$ to the graph of $G'$, with $G'(z)=G(z+a)+b$.
Hence, up to a translation of the input and the output, we can consider directly the linear bijection $\mathcal L$.
Then, we can write $\mathcal L$ as a matrix composed by four linear maps,
$$\mathcal L(z,w)=\begin{bmatrix}
    A_1 & A_2 \\ A_3 & A_4
\end{bmatrix}\cdot \begin{bmatrix}
    z\\w
\end{bmatrix} = \begin{bmatrix}
    A_1(z)+A_2(w)\\ A_3(z)+A_4(w)
\end{bmatrix} = \begin{bmatrix}
    L_1(z,w)\\ L_2(z,w)
\end{bmatrix}.$$
Recall that $\mathcal L(z,F(z))=(z',G(z'))$.
Set $$F_1(z)=L_1(z,F(z))=A_1(z)+A_2(F(z))$$
and  $$F_2(z)=L_2(z,F(z))=A_3(z)+A_4(F(z)).$$
Clearly, $F_1$ has to be a bijection and $G=F_2\circ F_1^{-1}$.
Notice that both $F_1$ and $F_2$ have degree at most the degree of $F$, while the bound on the degree of $G$ depends also on the degree of the inverse of $F_1$.

With this notation, in order to generate a random function $G$ in the CCZ class of $F$, we need to construct:
\begin{compactenum}
    \item A random linear map $L_1:\F n\times\F m\rightarrow\F n$ such that $L_1(z,F(z))$ is a bijection;
\item A random linear map $L_2:\F n\times\F m\rightarrow\F m$ such that $\begin{bmatrix}
    L_1\\L_2
\end{bmatrix}$ is a bijection over $\F {n+m}$.
\end{compactenum}

Clearly, $L_1$ (and consequentially $L_2$) strongly depends on the choice of the initial function $F$.
So, differently from the affine and the EA equivalence, it appears to be not easy to provide a general construction method for a random CCZ equivalent function.


\section{The twisting}\label{sec:twisting}
To study a possible way to construct a random CCZ equivalent map,
we consider a particular instance of CCZ equivalence, introduced for the case $q=2$ by Canteaut and Perrin with the name of twisting.
Indeed, in \cite[Theorem 3]{CanPer} the authors showed that any two CCZ equivalent functions are connected via the following 3 steps: EA transformation, $t$-twisting, EA transformation.

\subsection{Definition of twisting}
\label{subsec:twistingdef}

We recall the definition of $t$-twisting from \cite{CanPer}, generalized to any finite field $\fF$.

\begin{definition}\label{def:ttwist}
For $\ell$ a positive integer, we denote with $I_\ell$  the $\ell\times\ell$ identity matrix.
Given $0\le t\le \min(n,m)$, we say that two functions $F,G:\F n \rightarrow\F m$  are \emph{equivalent via $t$-twist (or $t$-twisting)} if $\mathcal G_G=M_t(\mathcal G_F)$, where $M_t$ the $(n+m)\times(n+m)$ matrix
$$M_t=\begin{bmatrix}
    0 & 0 & I_t &0\\ 0& I_{n-t} &0&0\\
    I_t &0&0&0 \\ 0&0&0&I_{m-t}
\end{bmatrix}.$$ It holds $M_t=M_t^T=M_t^{-1}$, so this is an equivalence relation.
\end{definition}
Assume that $F,G:\F n\rightarrow\F m$ are equivalent via $t$-twist.
Then, we can split the input and the output of the function $F$ in the first $t$ entries and the remaining $n-t$ (resp.\ $m-t$) entries for input  (resp.\ for output).
That is, for $x\in\F t, y\in\F{n-t}$ we write
\begin{equation}\label{eq:F(T,U)}
F(x,y)=(T(x,y),U(x,y))=(T_y(x),U_x(y))
\end{equation}
with $T:\F t\times\F{n-t}\rightarrow\F{t}$ and $U:\F t\times\F{n-t}\rightarrow\F{m-t}$.
Then, we can write \begin{align}\label{eq:MtG_F}
    M_t(
   (x,y),F(x,y) )=&M_t\cdot\begin{bmatrix}
        x\\y\\T(x,y)\\U(x,y)
    \end{bmatrix}
    =\begin{bmatrix}
        T(x,y)\\y\\x\\U(x,y)
    \end{bmatrix}.
\end{align}
To clarify the notation, here and in
the rest of the paper we identify $x$ with the $t$ variables $x_1,\dots, x_t$, and $y$ with the $n - t$
variables $y_1,\dots, y_{n-t}$. 

From \eqref{eq:MtG_F} we obtain immediately the following important result.

\begin{theorem}\label{rem:Tinvertible}\label{rem:Tdegree}
    Let $F$ be as in \eqref{eq:F(T,U)} and $M_t$ be as in Definition \ref{def:ttwist}. If $T_y:\fq^t\to\fq^t$ is a bijection for any $y\in\fq^{n-t}$, then $M_t(\mathcal G_F)$ is a graph of a function $G$ (i.e. $\mathcal G_G=M_t(\mathcal G_F)$) given by $G(x,y)=(T_y^{-1}(x),U(T_y^{-1}(x),y))$. Moreover, if the degree of $T_y^{-1}(x)$ is $d$ (i.e.\ the algebraic degree with respect to both $x$ and $y)$ and the algebraic degree of $U$ is $d'$, then the degree of $U(T_y^{-1}(x),y)$ is at most $d\cdot d'$.
\end{theorem}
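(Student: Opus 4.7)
The plan is to unpack $M_t(\mathcal G_F)$ coordinate-wise using \eqref{eq:MtG_F}, verify that it is the graph of a well-defined function by means of the bijectivity hypothesis on $T_y$, and then bound the degree of the resulting expression monomial by monomial.

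First, I would observe from \eqref{eq:MtG_F} that
\[
M_t(\mathcal G_F) = \{(T(x,y),\, y,\, x,\, U(x,y)) : (x,y)\in\fq^n\}\subseteq \fq^n\times\fq^m.
\]
For this set to be the graph of a function $G:\fq^n\to\fq^m$, the first $n$ coordinates $(T(x,y), y)$ must uniquely determine the last $m$ coordinates. This reduces to showing that the map $\Phi:\fq^n\to\fq^n$ given by $\Phi(x,y)=(T_y(x), y)$ is a bijection. But $\Phi$ preserves the second block $y$ and acts as $T_y$ on the first block for each fixed $y$; hence the hypothesis that each $T_y:\fq^t\to\fq^t$ is a bijection makes $\Phi$ bijective, with inverse $\Phi^{-1}(x',y')=(T_{y'}^{-1}(x'), y')$.

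Renaming the first $n$ coordinates as $(x,y)$ (the input to $G$), I would then solve $(x,y)=(T_{\tilde y}(\tilde x), \tilde y)$, which yields $\tilde y=y$ and $\tilde x=T_y^{-1}(x)$. The corresponding last $m$ coordinates are then $(\tilde x, U(\tilde x, \tilde y)) = (T_y^{-1}(x),\, U(T_y^{-1}(x), y))$, which is the claimed formula for $G$.

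For the degree bound, I would write $U$ in algebraic normal form as a sum of monomials $c_{\alpha,\beta}\, x^\alpha y^\beta$ with $|\alpha|+|\beta|\le d'$. Substituting each $x_i$ by the corresponding coordinate of $T_y^{-1}(x)$, which by hypothesis is a polynomial of degree at most $d$ in the joint variables $(x,y)$, turns such a monomial into a polynomial of degree at most $|\alpha|\, d + |\beta|\le (|\alpha|+|\beta|)\, d \le d'\, d$ whenever $d\ge 1$. Reducing back to the canonical ANF using $z_i^q=z_i$ on $\fq$ can only decrease this bound, so it survives. The main obstacle here is purely bookkeeping: one has to be careful that ``degree of $T_y^{-1}(x)$'' refers to the total algebraic degree in the joint variables $(x,y)$, as this is exactly what allows the clean substitution bound above; everything else is a direct unpacking of the block structure of $M_t$.
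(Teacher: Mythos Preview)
Your proposal is correct and matches the paper's approach: the paper does not give a separate proof but simply states that the result follows immediately from \eqref{eq:MtG_F}, and what you have written is exactly the natural unpacking of that claim---the bijectivity of the block map $(x,y)\mapsto(T_y(x),y)$ gives the graph property, and the monomial-by-monomial substitution gives the degree bound.
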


\subsection{Twisting of quadratic functions}
\label{subsec:F quadr}
As mentioned in the introduction, the majority of multivariate cryptographic schemes appearing in the literature deal with quadratic functions.
Therefore, we consider the case when the central map is a quadratic function and we present a preliminary study of the twisting for quadratic maps.

Let $F$ be a quadratic function admitting a $t$-twist. We keep the notation introduced in \S\ref{subsec:twistingdef} and we write $F(x,y)=(T(x,y),U(x,y))$. 
Since $F$ is quadratic, both functions $T$ and $U$ have degree at most 2.
By Theorem~\ref{rem:Tinvertible}, we require that for every $ y\in\F{n-t}$, $T(x, y)=T_{y}(x)$ is invertible.
Then, the bijection $T_y(x)$ is either affine or quadratic.
We deal with both cases separately.

\subsubsection{$T$ affine}
We assume that $T$ is affine, that is $T(x,y)=\ell(x)+\phi(y)$ with $\ell:\F t\rightarrow\F t$ linear bijection and $\phi:\F{n-t}\rightarrow\F t$ affine transformation. In this case, the map is easily invertible as $T_y^{-1}(x)=\ell^{-1}(x)-\ell^{-1}(\phi(y))$.
    However, this map
     will produce a function $G$ that is EA equivalent to $F$.
     Indeed, $G=F\circ A$, with $A$ the affine bijection of the form  $A(x,y)=(\ell^{-1}(x)-\ell^{-1}(\phi(y)),y)$.
     So, we are not interested in this case.

\subsubsection{$T$ quadratic}
The general case $T_y(x)$ quadratic 
    is quite difficult to analyse.
  By Theorem~\ref{rem:Tinvertible}, to apply a CCZ transformation, we need the map $T(x,y)=T_y(x)$  to be invertible for every fixed $y$.
A possible way to achieve this is to consider $T(x,y)$ as a function of the form
\begin{equation}\label{eq:T=l(x)+q(y)}
    T(x,y)=\ell(x)+\q(y),
\end{equation} with $\ell:\F t\rightarrow\F t$ a linear bijection and $\q:\F{n-t}\rightarrow\F t$ a quadratic function.
In this case, the inverse of $T_y(x)$ has the form
$T_y^{-1}(x)=\ell^{-1}(x)-\ell^{-1}(\q(y))$.
Indeed, we have 
\[
T(\ell^{-1}(x)-\ell^{-1}(\q(y)),y)=\ell(\ell^{-1}(x)-\ell^{-1}(\q(y)))+\q(y)=x.
\]
With $T$ as in \eqref{eq:T=l(x)+q(y)}, the map $T_y^{-1}(x)$ has degree at most two. Thus, Theorem~\ref{rem:Tdegree} implies that the map $G$ constructed with the $t$-twist has degree at most four.

\section{Multivariate CCZ Scheme}\label{sec:proposal}
We propose a scheme where the central map is hidden by an application of a CCZ transformation.
We present two versions of this proposal at increasing level of details. First, we present a generic instance of the scheme which can be applied to any central map admitting a $t$-twist (\S\ref{subsec:genproposal}). 
Then, we present some restrictions on the choice of the quadratic secret map used (\S\ref{sec:proposalquadratic} and \S\ref{subsec:onU}), summarized in a concrete instance with a specific choice for the central map (\S\ref{sec:pesto}). Finally, after conducting a preliminary security analysis in the following sections, we will present concrete parameters for this proposal in \S\ref{sec:parameters}.

\subsection{Generic CCZ Scheme}\label{subsec:genproposal}
As ``central map'' we consider $F:\F n\rightarrow\F m$  a function admitting a $t$-twist for an integer $1\le t\le\min(n,m)$. We remove the value $t=0$ since no modification is obtained in this way.
With the usual notation, we write $F$ as
$$F(x,y)=(T(x,y),U(x,y))$$
with $x\in\F t$, $y\in\F{n-t}$, $T:\F t\times\F{n-t}\rightarrow\F t$ such that $T(x,y)=T_y(x)$ is invertible in $x$ for every possible $y$,  and $U:\F t\times\F{n-t}\rightarrow\F{m-t}$.
The equivalent function $G$ has then the form $$G(x,y)=(T_y^{-1}(x),U(T_y^{-1}(x),y)).$$
Finally, we construct the public map as 
$G_{pub}=A_1\circ G\circ A_2,$
for $A_1,A_2$ random affine bijections of $\F m$ and $\F n$ respectively.
In the secret key, we need to store
the information needed to invert the map $G_{pub}$, consisting of 
$\langle A_1,A_2,T,U\rangle$. Equivalently, we can directly consider $\langle A_1^{-1},A_2^{-1},T,U\rangle$.

In the following, we describe in more details the steps to use this pair of public and secret keys for an encryption scheme and for a signature scheme.

\subsubsection{Proposal as an encryption scheme}

In an encryption scheme, a sender encrypts a message by using the public key.
The receiver recovers the original message by knowing the secret key.\\

\emph{Encryption}.
The sender encrypts the message $z\in \F n$ by evaluating the public key $G_{pub}$ and sends 
$c=G_{pub}(z)\in\F m$.\\

\emph{Decryption}. From $c\in \F m$, the receiver  has to compute its preimages, i.e.\ the set of solutions  $\Bar z$ such that $G_{pub}(\Bar z)=c$. This can be obtained in the following way.

\begin{enumerate}
    \item Given $c$, compute $c'=A_1^{-1}(c)$.
    To simplify the notation, for $c=G_{pub}(z)$, call $z'=A_2(z)$, so $c'=G(z')$.
    Moreover,  write $c'$  as $(c'_T,c'_U)\in\F t\times\F{m-t}$ and
    similarly  consider $z'=(x,y)\in\F t\times\F{n-t}$.
    So it holds \begin{align*}
        c'_T=&T_y^{-1}(x),\\
        c'_U=&U(T_y^{-1}(x),y),
    \end{align*}
    implying 
    $c'_U=U(c'_T,y)$.
    \item From $c'_U=U(c'_T,y)$, find the set of possible solution $\mathcal Y=\{y\in\F{n-t} : c'_U=U(c'_T,y) \}$.
    \item For $\Bar y\in\mathcal Y$, compute $\Bar x=T_{\bar y}(c'_T)$.
    \item For every possible pair of solutions $(\Bar x,\Bar{y})$, compute $\Bar z=A_2^{-1}(\Bar x,\Bar{y})$.
\end{enumerate}
In case in the decryption phase multiple solutions are obtained, we need to use  special techniques (for example hash functions or
redundancy in the plaintext) to make the decryption unique, see \cite{Dingch3}.

\subsubsection{Proposal as a signature scheme}
In a signature scheme,  a sender produces a signature for a document, using the knowledge of the secret information.
The receiver checks the validity of the signature for the received document.\\
Consider a hash function $\mathcal H:\{0,1\}^*\rightarrow\F m$.\\

\emph{Signature}.
Given a document $d\in\{0,1\}^*$, the sender wants to create a valid signature, knowing the secret information.
\begin{enumerate}
    \item Compute the hash value $w=\mathcal H(d)$. Compute $w'=A_1^{-1}(w)$ and write it as $w'=(w_T,w_U)\in\F{t}\times\F{m-t}$.
    \item  We need to find $x\in\F t$ and $y\in\F{n-t}$ such that $G(x,y)=w'$.
    So $T_y^{-1}(x)=w_T$ and $U(T_y^{-1}(x),y)=w_U$.
    \item Solve $U(w_T,y)=w_U$, pick randomly one of the solutions and call it $\Bar{y}$.
    \item Solve $T_{\Bar y}^{-1}(x)=w_T$, that is compute $\Bar{x}=T_{\Bar y}(w_T)$.
    \item Given the solution $(\Bar{x},\Bar{y})$, compute $(x',y')=A_2^{-1}(\Bar{x},\Bar{y})$.
    \item Output $(x',y')$ as signature.
\end{enumerate}

\emph{Verification}. The receiver wants to verify that the signature $(x',y')$ is valid for the document $d$.
\begin{enumerate}
    \item Compute $w=\mathcal H(d)$.
    \item Check that $G_{pub}(x',y')=w$.
\end{enumerate}
\paragraph{Correctness of the procedures}
For the signature scheme, the correctness is verified by the following relation,
\begin{align*}
    G_{pub}(x',y')=& A_1\circ G\circ A_2(x',y')= A_1\circ G(\Bar{x},\Bar{y})\\
    =&A_1\circ (T_{\Bar y}^{-1}(\Bar x),U(T_{\Bar y}^{-1}(\Bar x),y))=A_1 (w_T,w_U) = w.
\end{align*}
The correctness of the encryption scheme $G_{pub}(\Bar m)=c$ is verified in the same way.

\paragraph{Conditions for signature scheme and encryption scheme}
For both encryption scheme and signature scheme, we need $T(x,y)=T_y(x)$ to be an invertible function with respect to $x$, for every possible value of $y$. 

Regarding the function $U(x,y)$, different conditions must be satisfied.
If we want to use $F$ as a central map for a signature scheme, we need that for any possible $(a,b)\in\F{t}\times\F{m-t}$, the system $U(a,y)=b$ always has a solution.
Instead, for the encryption scheme, we need the system to have a few solutions. Indeed, in this case, the receiver has to consider all possible pre-images of the map $G_{pub}$ and find the correct one.

\subsection{Our proposal}\label{subsec:ourproposal}
Assume that $F:\F n\rightarrow\F m$ is a quadratic function which admits a $t$-twist, so
$F(x,y)=(T(x,y),U(x,y))$ where both $T$ and $U$ have at most degree 2.
In what follows, we present our proposal for the choice of the maps $T$ and $U$.

\subsubsection{The choice of $T$}\label{sec:proposalquadratic}
Given the analysis presented in \S\ref{subsec:F quadr}, we choose the map $T$ as in Equation~\eqref{eq:T=l(x)+q(y)}, which allows us to easily invert $T_y(x)$.
Therefore, we consider $T(x,y)=\ell(x)+\q(y),$
where $\ell:\F t\rightarrow\F t$ is a linear invertible tranformation and $\q:\F{n-t}\rightarrow\F t$ is a quadratic map.
We show in the following that we do not loose generality by restricting to $\ell(x)=x$ the identity map.
\begin{remark}
The $t$-twisted map $G$ and the public map $G_{pub}$ are $$G(x,y)=(T_y^{-1}(x),U(T_y^{-1}(x),y))=(\ell^{-1}(x)-\ell^{-1}(\q(y)),U(\ell^{-1}(x)-\ell^{-1}(\q(y)),y))$$
and 
$G_{pub}=A_1\circ G\circ A_2,$
with $A_1,A_2$ affine bijections of $\F m$ and $\F n$ respectively.
    We  can always write the affine bijection $A_2$ as $A_2=L\circ A_2'$ with
$L(x,y)=(\ell(x),y)$
and $A_2'=L^{-1}\circ A_2$.
Then, we set $G'=G\circ L$, so we have $G_{pub}=A_1\circ G'\circ A_2'$ with 
\begin{align*}
    G'(x,y)=&
    G\circ L(x,y)
    =
    (x-\ell^{-1}(\q(y)),U(x-\ell^{-1}(\q(y)),y)).
\end{align*}
This means that, since $A_2$ is chosen at random, we can assume without loss of generality that $\ell(x)=x$.
\end{remark}

Given the previous consideration, our choice is
 $$T(x,y)=x+\q(y),$$ leading to a $t$-twist of the form
$G(x,y)=(x-\q(y),U(x-\q(y),y))$.
Observe that with this choice of $T$, to compute $\Bar{x}=T_{\Bar{y}}(w_T)$ simply corresponds to computing $\Bar x=w_T+\q(\bar y)$.

\subsubsection{The choice of $U$}\label{subsec:onU}
Now, we propose a possible choice for the quadratic map $U$.
We recall that we need $U(x,y)$ to be such that, fixed $x$, it is easy to get the preimages (or a preimage) with respect to $y$. A possible way to achieve this is to use Oil and Vinegar (OV) maps.
Therefore, chosen a parameter $s$ with $0\le s\le n-t$,
we propose to construct the map $U$ as a system of OV equations
with $t+s$ vinegar variables and $n-t-s$ oil variables.
To be more specific, in this proposal $U$ consists of a system of $m-t$ equations
of the form
\begin{equation}\label{eq:OVmaps}
  f^{(i)}=\sum_{j,k\in V}\alpha_{jk}^{(i)}z_jz_k+\sum_{j\in V,k\in O}\beta_{jk}^{(i)}z_jz_k+\sum_{j\in V\cup O}\gamma_j^{(i)}z_j+\delta^{(i)}
\end{equation}
with $\{z_j : j\in V\}=\{x_1,\ldots,x_t,y_1,\ldots,y_s\}$ and $\{z_j : j\in O\}=\{y_{s+1},\ldots,y_{n-t}\}$ respectively the sets of vinegar  and oil variables, and with
the coefficients $\alpha_{jk}^{(i)},\beta_{jk}^{(i)},\gamma_j^{(i)},\delta^{(i)}$ randomly chosen over $\fF$.
Notice that, fixed the vinegar variables, the system is linear in the oil variables, hence it is easy to solve, for example with a simple Gaussian reduction.
The legitimate user can get the preimages of $U$ with respect to $y$ (fixed $x$) using classical techniques from OV systems \cite{Ding2009,Ding2020}.
Notice that the easiest system is obtained for $s=0$, however from the analysis presented in \S\ref{sec:securityanalysis}, this is not a good choice.

\subsubsection{\texttt{Pesto}}\label{sec:pesto}
We sum up all the previous choices in the following definition: the scheme \texttt{Pesto}.
\begin{definition}[\texttt{Pesto} scheme]\label{def:Pesto}
Fix positive integer parameters $n,m,t,s$ with $t\le \min(n,m)$ and $s\le n-t$, consider the following maps:\begin{itemize}
    \item $\q:\F{n-t}\rightarrow\F t$ random quadratic map (so $T(x,y)=x+\q(y)$);
    \item $U:\F t\times\F{n-t}\rightarrow\F{m-t}$ a system of $m-t$ random OV maps with $x_1,\ldots,x_t,y_1,\ldots,y_s$ vinegar variables and $y_{s+1},\ldots,y_{n-t}$ oil variables as in Equation~\eqref{eq:OVmaps};
    \item $A_1:\F m\rightarrow\F m$ a random affine bijection;
    \item $A_2:\F n\rightarrow\F n$ a random affine bijection.
\end{itemize}
Set $G=(x-\q(y),U(x-\q(y),y))$.
Then the map $G_{pub}=A_1\circ G\circ A_2$ is the public key, and  $\langle A_1,A_2,\q,U\rangle$ constitutes the secret key.  
\end{definition}
In the following remark, we stress the role of $s$.
\begin{remark}
   The amount of possible signatures for a document, or the amount of possible plaintexts for a given ciphertext,  depends on the value of $s$. 
   \begin{itemize}
    \item For an encryption scheme, we need to find all possible solutions of $c'_U=U(c'_T,y)$.
    Here we need to try all possible values for $y_1,\ldots,y_s$ (that is, $q^s$ possibilities) and then solve a linear system of $m-t$ equations in $n-t-s$ variables.
    \item For a signature scheme, we need to find only one solution of $w'_U=U(w'_T,y)$.
    Hence, we pick random values for $y_1,\ldots,y_s$
    and then we solve a linear system of $m-t$ equations in $n-t-s$ variables.
    If the system does not have a solution, we pick other random values for $y_1,\ldots,y_s$.
\end{itemize}
Hence, when choosing the parameters $n,m,t,s$, one has also to consider if the linear system to solve, $m-t$ equations in $n-t-s$ variables, should be:\begin{itemize}
    \item determined with high probability ($m-t=n-t-s$, that is $s=n-m$);
    \item overdetermined ($m-t>n-t-s$, that is $s>n-m$);
    \item underdetermined ($m-t<n-t-s$, that is $s<n-m$).
\end{itemize}
\end{remark}

\subsubsection{Toy Example}
\label{toyexample}
We provide here a toy example of \texttt{Pesto} over $\mathbb{F}_5$.
We take $n=5$, $m=4$, $t=2$, and $s=1$.
Therefore we have the following set of variables: $x=\{x_1,x_2\}$ and $y=\{y_1,y_2,y_3\}$.
To define the map $T(x,y):\mathbb{F}_5^2\times\mathbb{F}_5^4\rightarrow\mathbb{F}_5^2$, we need to define a quadratic map $\q:\mathbb{F}_5^4\rightarrow\mathbb{F}_5^2$.
Set $$\q(y)=\begin{bmatrix}\small
      y_1^2 + 2y_1y_2 + 4y_2^2 + 3y_2y_3 + y_2 + 3y_3^2 + 4\\ \\
    3y_1^2 + 3y_1y_2 + 2y_1y_3 + 2y_2y_3 + 2y_2 + y_3^2 +
        2y_3
\end{bmatrix},$$
therefore
 $T(x,y)=x+\q(y)$.
The map $U$ is an OV system of 2 equations with $x_1,x_2,y_1$ vinegar variables and $y_2,y_3$ oil variables.
Hence we consider
$$U(x,y)=\begin{bmatrix}\small
     x_1^2 + 2x_1x_2 + 3x_1y_1 + x_1y_2 + 4x_1y_3 + 2x_1 +
        x_2^2 + x_2y_1+ x_2y_2\\
         + 3x_2y_3 + 3x_2
        + 2y_1^2 +
        2y_1y_2 + 2y_1y_3 + y_1 + 4y_2 + y_3\\ \\
    x_1x_2 + x_1 + 4x_2^2 + 2x_2y_2 + 3x_2y_3 + 3x_2 +
        2y_1y_3 + 3y_1 + 3y_3 + 1
\end{bmatrix}.$$

Having constructed these maps, we have that $G(x,y)$ has the following form
$$
\begin{bmatrix}\small
    x_1 + 4y_1^2 + 3y_1y_2 + y_2^2 + 2y_2y_3 + 4y_2 + 2y_3^2 +
        1\\ \\
    x_2 + 2y_1^2 + 2y_1y_2 + 3y_1y_3 + 3y_2y_3 + 3y_2 +
        4y_3^2 + 3y_3\\ \\
    x_1^2 + 2x_1x_2 + 2x_1y_1^2 + x_1y_1y_3 + 3x_1y_1 +
        2x_1y_2^2 + 2x_1y_3^2 + 4x_1 + x_2^2 + 2x_2y_1^2 +
        x_2y_1y_3  \\
        + x_2y_1+ 2x_2y_2^2 + 2x_2y_3^2 + 4x_2y_3
        + y_1^4 + y_1^3y_3 + 4y_1^3 + 2y_1^2y_2^2 + y_1^2y_2 +
        y_1^2y_3^2 + y_1^2y_3 + 3y_1^2\\
        + y_1y_2^2y_3 +
        3y_1y_2^2 + 2y_1y_2y_3 + 4y_1y_2 + y_1y_3^3 +
        2y_1y_3^2 + 4y_1 + y_2^4 + 2y_2^2y_3^2 + 2y_2y_3^2 \\
        +        3y_2y_3 + y_2+ y_3^4 + y_3^3 + y_3^2 + 3\\ \\
    x_1x_2 + 2x_1y_1^2 + 2x_1y_1y_2 + 3x_1y_1y_3 +
        3x_1y_2y_3 + 3x_1y_2 + 4x_1y_3^2 + 3x_1y_3 + x_1 +
        4x_2^2 + 4x_2y_1y_2\\
        + 4x_2y_1y_3 + x_2y_2^2 +
        x_2y_2y_3 + 4x_2y_3^2 + 2x_2y_3 + 4x_2 + 4y_1^4 +
        y_1^3y_2 + 4y_1^2y_2^2 + y_1^2y_2y_3 + 2y_1^2y_2 +\\
        y_1^2y_3 + 2y_1^2 + 2y_1y_2^3 + 4y_1y_2^2y_3 +
        4y_1y_2^2 + 3y_1y_2y_3^2 + 3y_1y_2y_3 + y_1y_2 +
        2y_1y_3^3+ y_1y_3^2\\
         + 4y_1y_3 + 3y_1 + 3y_2^3y_3 +
        3y_2^3 + y_2^2y_3^2 + 4y_2^2y_3 + 3y_2y_3^2 + 3y_2y_3 +
        y_2 + 2y_3^4 + 4y_3^3 + 3y_3^2 + 2
\end{bmatrix}.$$

Finally, we consider the following affine bijections of $\mathbb{F}_5^5$  and $\mathbb{F}_5^4$
$$A_2(x,y)=\begin{bmatrix}
    1 &4& 3& 2& 1\\
2& 0& 1& 1& 4\\
3& 2& 2& 0& 2\\
1& 2& 2& 2& 3\\
2& 3& 4& 4& 2
\end{bmatrix}\begin{bmatrix}
    x_1\\x_2\\y_1\\y_2\\y_3
\end{bmatrix}+\begin{bmatrix}
    2\\ 1\\ 3\\ 2\\ 2
\end{bmatrix},\qquad A_1(z)=\begin{bmatrix}
   2& 3& 2& 1\\
4& 2 &3& 1\\
1& 2 &1& 3\\
1& 4& 3 &1
\end{bmatrix}\begin{bmatrix}
    z_1\\z_2\\z_3\\z_4
\end{bmatrix}+\begin{bmatrix}
   1\\ 0\\ 0\\ 4
\end{bmatrix}.$$
All the mentioned functions, namely $\q,U,A_1,A_2$, were randomly generated with the help of the MAGMA software \cite{magma}.
The public map $G_{pub}=A_1\circ G\circ A_2$ consists of $4$ dense polynomials of degree 4. 
Given their sizes, we opt for not reporting them here.


\section{Computational remarks}\label{sec:computationalrmks}
From now on, we focus on our proposal \texttt{Pesto} of Definition~\ref{def:Pesto}.
We study the form of the secret and public key, their size, and the cost of the procedure.

\subsection{The form of $G$ and $G_{pub}$}\label{sec:comments}
Since the map $T(x,y)=x+\q(y)$ and its inverse  $T_y^{-1}(x)=x-\q(y)$ have degree $2$, by Theorem~\ref{rem:Tdegree} the degree of the $t$-twisted map $G$ is at most $4$.
We analyse the monomials appearing in $G$ (and in $G_{pub}$) more closely.
\begin{remark}
As stated above, the first $t$ coordinates of $G$ are those from $T_y^{-1}(x)=x-\q(y)$, which have degree 2 and consist of terms of the form $x_i, y_i, y_iy_j$.
Now, we consider the last $m-t$ coordinates of $G$, corresponding to $U(T_y^{-1}(x),y)=U(x-\q(y),y)$.
Since $U$ consists of OV quadratic polynomials $f^{(i)}$ of the form of Equation~\eqref{eq:OVmaps}, in $U(x,y)$ we can find terms of the  form $x_i, y_i, x_ix_j, x_iy_j, y_iy_j$.
By evaluating $x\mapsto x-\q(y)$, the variable $x_i$ might produce terms of the form $x_j$, $y_j$ and $y_jy_k$.
Therefore, potentially, in the last $m-t$ coordinates of $G$ we can have terms of the form
$x_i$, $y_i$, $x_ix_j$, $x_iy_j$, $y_iy_j$, $x_iy_jy_k$, $y_iy_jy_k$, $y_iy_jy_ky_l$.
Notice that, even if we replace $U$ with a dense quadratic map, the possible terms in $G$ have the same form.
To sum up, even if the degree of $G$ is up to 4, the terms of degree 4 involve only variables in $y$, while terms of degree 3 involve only variables in $y$ or 2 variables in $y$ and 1 variable in $x$.
\end{remark}

The public map $G_{pub}=A_1\circ G\circ A_2$ consists of dense polynomials of degree up to 4, since the random map $A_2$ will remove the above-mentioned restrictions.
However,  the affine transformation keeps invariant the amount of components of a fixed degree, so to analyse the possible degrees of the components of $G_{pub}$, we can directly consider the map $G(x,y)=(x-\q(y), U(x-\q(y),y))$.
So, $G_{pub}$ has at least $q^t-1$ quadratic components, since the components $\lambda\cdot G$ with $\lambda=(v,0_{m-t})$ with nonzero $v\in \F t$ are quadratic.
Moreover, we note that in all the systems constructed during the preparation of this paper, $U(x - \q(y), y)$ does not contain quadratic components. 
This is expected since the polynomials in $U$ and $\q$ are quadratic polynomials with random coefficients. In \S\ref{sec:isolatingquadratic}, we will show how this can be leveraged to propose a reduction in the key sizes.

\subsection{Dimensions of the keys}\label{subsec: dim-of-key}
In the following, we provide an analysis of the sizes of the public and private keys. Furthermore, in Table~\ref{tab:sizekeys} we will present a collection of concrete values of the key sizes for a selection of parameters.

\begin{proposition}\label{prop:pestokeysize}
    Consider a \texttt{Pesto} scheme as proposed in Definition \ref{def:Pesto}.
    Then the public key consists of $m\cdot{n+4\choose 4}$ coefficients over $\fF$, and the secret key consists of $m^2+m+n^2+n+t{n-t+2\choose2}+(m-t){t+s+2\choose2}+(m-t)(n-t-s)(t+s+1)$ coefficients over $\fF$.    
\end{proposition}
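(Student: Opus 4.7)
The plan is to verify both formulas by directly counting coefficients; the argument is a bookkeeping exercise and I do not anticipate any genuine conceptual obstacle.

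For the public key, by the analysis in \S\ref{sec:comments} (which itself rests on Theorem~\ref{rem:Tdegree}), each of the $m$ coordinates of $G_{pub}=A_1\circ G\circ A_2$ is a dense polynomial in $n$ variables of degree at most $4$. Since the number of monomials of degree $\le 4$ in $n$ variables over $\fF$ equals $\binom{n+4}{4}$, storing $G_{pub}$ requires $m\cdot \binom{n+4}{4}$ coefficients.

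For the secret key $\langle A_1,A_2,\q,U\rangle$, I would add the contributions of the four pieces separately. The affine bijection $A_1$ of $\F m$ is specified by an $m\times m$ matrix together with a translation vector in $\F m$, giving $m^2+m$ coefficients; analogously, $A_2$ needs $n^2+n$. The quadratic map $\q\colon\F{n-t}\to\F t$ has $t$ coordinates, each a polynomial of degree $\le 2$ in $n-t$ variables and therefore described by $\binom{n-t+2}{2}$ coefficients, for a total of $t\binom{n-t+2}{2}$.

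The only step requiring some care is counting the coefficients of $U$. I would split each of the $m-t$ polynomials $f^{(i)}$ from Equation~\eqref{eq:OVmaps} into three blocks: (i) the vinegar-vinegar quadratic terms $\sum_{j,k\in V}\alpha_{jk}^{(i)}z_jz_k$ together with their linear and constant companions in the vinegar variables, i.e.\ all monomials of degree $\le 2$ in the $t+s$ vinegar variables, contributing $\binom{t+s+2}{2}$ coefficients; (ii) the bilinear vinegar-oil part $\sum_{j\in V,k\in O}\beta_{jk}^{(i)}z_jz_k$, contributing $(t+s)(n-t-s)$ coefficients; and (iii) the linear terms in the $n-t-s$ oil variables, contributing $n-t-s$ more coefficients. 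The contributions in (ii) and (iii) sum to $(n-t-s)(t+s+1)$. Multiplying by the $m-t$ polynomials in $U$ and adding all four blocks $A_1,A_2,\q,U$ recovers exactly the stated formula, which completes the proof.
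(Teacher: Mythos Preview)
Your proposal is correct and follows essentially the same approach as the paper: both count $\binom{n+4}{4}$ monomials per public coordinate, tally $A_1,A_2$ as matrix-plus-vector, count $\q$ as $t$ quadratics in $n-t$ variables, and decompose each OV polynomial of $U$ into a degree-$\le 2$ part in the $t+s$ vinegar variables plus the oil terms multiplied by a vinegar variable or by $1$. Your split of the latter into the bilinear block $(t+s)(n-t-s)$ and the linear-in-oil block $n-t-s$ is just a slightly more explicit version of the paper's single expression $(n-t-s)(t+s+1)$.
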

\begin{proof}
    Recall that a polynomial of degree $r$ in $n$ variables with coefficients over $\fF$  has $M_n(r)={n+r\choose r}$ terms.
Notice that when $q\le r$, we can use the field equations $x_i^q-x_i$ to lower the degrees of the polynomials over  $\fF$ and store fewer coefficients.
 $G_{pub}$ is a system of $m$ equations of degree at most 4 in $n$ variables, where
each equation has $M_n(4)={n+4\choose 4}$ possible terms.
Hence, for $G_{pub}$ we need to specify $m\cdot M_n(4)$ coefficients over $\fF$.

Regarding the secret key, we want to store the information to recover $A_1,A_2,\q,U$.
\begin{itemize}
    \item $A_1$ and $A_2$ are  affine bijections over $\F m$ and $\F n$ respectively: They correspond to  invertible $m\times m$ and $n\times n$ matrices over $\fF$ plus a constant vector.
    So this means $m^2+m+n^2+n$ coefficients over $\fF$.
    \item $\q:\F{n-t}\rightarrow\F t$ quadratic: It corresponds to $t$ quadratic equations in $n-t$ variables.
    So,  we need to store $t\cdot M_{n-t}(2)$ coefficients over $\fF$. 

    \item $U:\F{n}\rightarrow\F{m-t}$  a quadratic OV system of $m-t$ equations with $t+s$ vinegar variables and $n-t-s$ oil variables.
    We can see each equation as a quadratic map in $t+s$ variables, $M_{t+s}(2)$ terms, plus a map with oil variables multiplied by a vinegar variable or multiplied by 1, $(n-t-s)(t+s+1)$ terms.
    Thus, we need to store $(m-t)\cdot\big( M_{t+s}(2)+(n-t-s)(t+s+1)\big)$ coefficients over $\fF$.
\end{itemize}
\end{proof}

\subsection{Computational cost of the procedure}
In this subsection, we present a tentative analysis of the computational cost of the proposed multivariate scheme. Notice that the application of this procedure reduces to the evaluation of polynomials of degree at most four and computing the solution of linear systems. 
Considering the direct evaluation of a polynomial, we have the following estimates. 

\begin{itemize}
    \item To evaluate affine polynomials we perform $\mathfrak m_1(n)=2{n+1\choose1}-n-2=n$ multiplications and ${n+1\choose1}-1=n$ additions.
    \item To evaluate quadratic polynomials we perform $\mathfrak m_2(n)=2{n+2\choose2}-n-2=n(n+2)$ multiplications and ${n+2\choose2}-1$ additions.
    \item To evaluate quartic polynomials we perform $\mathfrak m_4(n)=2{n+4\choose4}-n-2$ multiplications and ${n+4\choose4}-1$ additions.
\end{itemize}
This is clearly an upper bound, since more efficient techniques might be used, see for example \cite{BallicoEliaSala}.
We indicate with $\mathfrak M(r,n)$ the number of multiplications needed to solve a linear system of $r$ equations in $n$ variables.
Recall that $\mathfrak M(r,n)$ is at most of the order $rn\cdot\min(r,n)$, see for example \cite[Appendix B]{Gauss-red}.

In the following, we present an estimate on the cost of applying the proposed procedure in terms of multiplications, since to multiply is more expensive than to add.
\begin{proposition}
    Consider a signature scheme based on the \texttt{Pesto} scheme as in Definition~\ref{def:Pesto}.
    Set $\mathfrak m_i(r)$ and $\mathfrak M(r,k)$ as defined before.
    Excluding the computation of the hash value, the computational cost, in terms of multiplications over $\fF$, to verify the validity of a signature is $m\cdot\mathfrak m_4(n)$, whereas the computational cost to produce a valid signature is
    $m\cdot\mathfrak m_1(m)+t\cdot\mathfrak m_2(n-t)+n\cdot\mathfrak m_1(n)+(m-t)\left(\mathfrak m_2(t)+\mathfrak m_2(s)\right)+\mathfrak M(m-t,n-t-s)$.
\end{proposition}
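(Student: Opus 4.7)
The plan is to trace through both the verification and signing algorithms step by step, invoking the per-operation costs $\mathfrak m_1,\mathfrak m_2,\mathfrak m_4$ and the linear-system cost $\mathfrak M$ stated just before the proposition, and to sum the contributions.

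For verification, the entire procedure consists of evaluating the public map $G_{pub}$ at the candidate signature $(x',y')\in\F n$ and comparing the result with the hash value $w\in\F m$. Since $G_{pub}$ is a system of $m$ polynomials of degree at most $4$ in $n$ variables, each component evaluation costs $\mathfrak m_4(n)$ multiplications and the final equality check is free, yielding the stated $m\cdot\mathfrak m_4(n)$.

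For signing I would follow the six steps of the signing algorithm in~\S\ref{subsec:genproposal}, attributing multiplications to each. Applying $A_1^{-1}$ to $w$ is an affine map on $\F m$ and contributes $m\cdot\mathfrak m_1(m)$. Finding $\bar y$ amounts to solving $U(w_T,y)=w_U$: after fixing the vinegar block $(w_T,y_1,\dots,y_s)$, each of the $m-t$ OV polynomials in~\eqref{eq:OVmaps} becomes linear in the oil variables $y_{s+1},\dots,y_{n-t}$, so one must evaluate the vinegar-quadratic contributions to obtain both the right-hand side and the coefficient matrix of a $(m-t)\times(n-t-s)$ linear system; splitting the $t+s$ vinegar variables into the $t$ coming from $x$ and the $s$ coming from $y_1,\dots,y_s$, this substitution accounts for $(m-t)(\mathfrak m_2(t)+\mathfrak m_2(s))$ multiplications, and the subsequent linear solve contributes $\mathfrak M(m-t,n-t-s)$. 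Computing $\bar x = w_T+\q(\bar y)$ reduces to evaluating $\q:\F{n-t}\to\F t$, costing $t\cdot\mathfrak m_2(n-t)$, since the addition of $w_T$ is free. Finally, applying $A_2^{-1}$ to $(\bar x,\bar y)\in\F n$ costs $n\cdot\mathfrak m_1(n)$. The random choice of the vinegar coordinates $y_1,\dots,y_s$ costs no field multiplications. Summing these five contributions produces the stated formula.

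The main subtlety I anticipate lies in the OV substitution in Step~3 of the signing algorithm: the polynomials in~\eqref{eq:OVmaps} mix vinegar-vinegar, vinegar-oil, linear, and constant terms, and one must carefully disentangle which pieces feed the right-hand side of the reduced linear system and which feed its coefficient matrix, and justify the chosen accounting---splitting the vinegar block to obtain $\mathfrak m_2(t)+\mathfrak m_2(s)$ for the quadratic-vinegar evaluations---as a valid upper bound on the actual work. Once this bookkeeping is fixed, every other step reduces to a direct polynomial or affine-map evaluation whose multiplication count follows immediately from the formulas for $\mathfrak m_1$, $\mathfrak m_2$ and $\mathfrak M$ recalled at the start of the subsection, so the remainder of the proof is a straightforward addition.
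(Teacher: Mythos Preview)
Your proposal is correct and follows essentially the same approach as the paper's own proof: both trace the verification and signing algorithms step by step, attributing the same costs $m\cdot\mathfrak m_1(m)$, $(m-t)\mathfrak m_2(t)$, $(m-t)\mathfrak m_2(s)$, $\mathfrak M(m-t,n-t-s)$, $t\cdot\mathfrak m_2(n-t)$, and $n\cdot\mathfrak m_1(n)$ to the corresponding operations. The only cosmetic difference is that the paper handles the substitution of $w_T$ into $U$ and the subsequent substitution of the random $y_1,\dots,y_s$ as two separate numbered steps, whereas you combine them into a single discussion; the accounting and the final sum are identical.
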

\begin{proof}

Assume that the scheme was already initialized, a secret key and the corresponding public key were already constructed.
To verify the validity of a signature $\mathfrak s\in\F n$ we need to evaluate $G_{pub}(\mathfrak s)$.
$G_{pub}$ is a system of $m$ equations in $n$ variables of degree at most 4.
This corresponds to a total of $m\cdot\mathfrak m_4(n)=m\left(2{n+4\choose4}-n-2\right)$ multiplications.
\\
To create a valid signature for $w\in\F m$, we have to perform the following.\begin{enumerate}
    \item Compute $w'=A_1^{-1}(w)$. Since $A_1^{-1}$ is an affine bijection, this means $m\cdot\mathfrak m_1(m)=m^2$ multiplications.
    \item Evaluate $U(w_T',y)$.
    This is an evaluation of quadratic polynomials in $t$ variables and corresponds to $(m-t)\cdot\mathfrak m_2(t)=(m-t)t(t+2)$ multiplications.
    \item  Find a preimage of $U(w_T',y)=w_U'$.
    \item For one preimage $\Bar{y}$ found, evaluate $\Bar x=w_T'+\q(\Bar y)$.
    Since $\q$ is a system of $t$ quadratic equations in $n-t$ variables, this means $t\cdot\mathfrak m_2(n-t)=t\cdot(n-t)(n-t+2)$ multiplications.
    \item From the solution pair  $(\Bar x,\Bar y)$, compute $A_2^{-1}(\Bar x,\Bar y)$. This corresponds to $n\cdot\mathfrak m_1(n)=n^2$ multiplications.
\end{enumerate}

We are left with analysing the cost for the third step, where we have to find a preimage of $U(w_T',y)=w_U'$.
Hence, we need to set $y_1,\ldots,y_s$ to random values, evaluate those variables and then solve a linear system.
The cost of evaluating $s$ variables corresponds to $(m-t)\cdot\mathfrak m_2(s)=(m-t)s(s+2)$ multiplications.
The cost of solving the final system of $m-t$ equations in $n-t-s$ variables is $\mathfrak M(m-t,n-t-s)$ multiplications.
\end{proof}

\begin{remark}
If we consider an encryption scheme, the cost of encrypting a message and decrypting a valid ciphertext can be deduced from the analysis just reported. The fundamental difference is that we will need to compute the entire set of preimages of $U(c_T',y)=c_U'$.
This means that, in the second step, we need to evaluate $y_1,\ldots,y_s$ in every possible value. Hence, we need to compute $q^s$ evaluations and then solve $q^s$ linear systems.
\end{remark}

\section{Security Analysis}\label{sec:securityanalysis}
In this section, we provide some considerations on the security of the scheme \texttt{Pesto} of Definition~\ref{def:Pesto}. Among them, we consider attacks that have been exploited for the MI cryptosystem (and its generalizations) and we analyse under which conditions it would be possible to extend these also to our system. 

\subsection{The importance of $A_2$}

We start this analysis by presenting a first observation on the security of the scheme, which stresses the importance of the affine bijection $A_2$.
Indeed, if an attacker is able to recover $A_2$, then they can operate as follows.
First, compute $\Bar G=G_{pub}\circ A_2^{-1}$ and isolate its quadratic components: 
$t$ of them are of the form $x_i-\q_i(y)$, for $i=1,\ldots,t$. 
From this, it is rather easy to solve the system.
Suppose a ciphertext $c$ is given, and the attacker wants to find a message $m$ such that $G_{pub}(m)=c$.
Write $G_{pub}(m)=\Bar G\circ A_2(m)=c$. From the isolated quadratic equations, recover $x=\q(y)+c_T$. 
By substituting this into the remaining equations of $\bar G$, solve the system, which has at most $s$ variables appearing in quadratic terms. Notice that this last step is equivalent to what a legitimate user has to do.
Once the solution $(x,y)$ of this system is found, then $m=A_2^{-1}(x,y)$.

In order to perform this attack, the attacker has to completely recover $A_2$ but only partially $A_1$ (it is enough to isolate the quadratic components of the form $x_i-\q_i(y)$).
The cost to isolate these quadratic components will be treated in the following section. 

\subsection{Isolating the quadratic components: the shape of $A_1$}\label{sec:isolatingquadratic}
In this section, we explain a strategy to isolate, within $G_{pub}$, the components that correspond to a (linear combination of) the first $t$ coordinates of $G$. This demonstrates that, in many situations, the public system can be partitioned into two sets of equations and suggests that the external affine transformation $A_1$ can be chosen in a specific form, thereby reducing the size of the secret key.

We try to estimate the cost of isolating the quadratic components of $G_{pub}$.
Let $r\ge t$ be such that the number of quadratic components of  $G_{pub}$ is $q^r-1$.
Notice that when $r=t$, then $U(x-\q(y),y)$ does not have quadratic components. As mentioned in \S\ref{sec:comments},  this situation occurred in all the experiments we conducted.
We consider the list $\Delta$ of all possible terms of degree 3 and degree 4 in $n$ variables appearing in $G_{pub}$.
We have
\begin{align*}
        |\Delta|=&{n+4\choose4}-{n+2\choose2}=(n+2)(n+1)\frac{n^2+7n}{24}.
\end{align*}
We construct the matrix $A$ with $n$ rows and $|\Delta|$ columns, with at position $(i,j)$ the coefficients of the $j$-th term of $\Delta$ in the $i$-th equation of $G_{pub}$.
So, to recover the quadratic equations of $G_{pub}$ it is enough to solve the linear system $x^TA=A^Tx=0_{|\Delta|}$. The set of solutions corresponds to the set of quadratic components of $G_{pub}$. The size of this set is $q^r$.
Recall that the cost of solving a linear system with $n$ equations in $|\Delta|$ variables $\mathfrak M(n,|\Delta|)$ is of the order $O(n^6)$.
Note, however, that this approach allows us to recover the space of quadratic components of $G\circ A_2$, but not the exact quadratic coordinates.

\begin{remark}\label{rmk:A1 reduced}
    The previous observations suggest that a dense affine transformation $A_1$, which completely mixes the first $t$ equations of $G$ with the remaining $m-t$ equations, may not provide additional security, since the first $t$ equations can be recovered at a cost of $O(n^6)$ by an attacker. Therefore, we suggest considering the matrix corresponding to the linear part of $A_1$ in the form $\begin{pmatrix}* & 0 \\ * & *\end{pmatrix}$, with a zero block of size $t\times (m-t)$. This way, we do not fully mix all the equations and we reduce the secret key and public key sizes. With this observation, the secret key size of \texttt{Pesto} from Proposition~\ref{prop:pestokeysize} becomes 
    \[
    m^2-t(m-t)+m+ n^2+n+t{n-t+2\choose2}+(m-t){t+s+2\choose2}+(m-t)(n-t-s)(t+s+1)
    \]
    coefficients over $\fF$ to store;
    the public key size of \texttt{Pesto} from Proposition~\ref{prop:pestokeysize} becomes 
    \[
    (m-t)\cdot{n+4\choose 4}+t\cdot{n+2\choose 2}
    \]
    coefficients over $\fF$ to store.

    In Table~\ref{tab:sizekeys}, we present a comparison of the sizes of secret and public keys between the original version with a generic affine transformation $A_1$ as in Proposition~\ref{prop:pestokeysize} and this modified version. The choice of parameters will be explained in \S\ref{sec:parameters}.
    
\begin{table}[h]
   \centering
   \begin{tabular}{|c|c|c|c||c|c|| c|c|}
   \hline
       $n$& $m$ & $t$& $s$ & amount for $pk$ & amount for $sk$ & amount for $pk_r$ & amount for $sk_r$  \\
       \hline\hline
       27&25&10&2&$\sim\num{786}\cdot10^3$ 
       &\num{7406} &$\sim\num{476}\cdot10^3$ 
       & \num{7256} \\ \hline
       40& 38 & 14 & 2& $\sim\num{515}\cdot10^4$ 
       &\num{21878}& $\sim\num{327}\cdot10^4$ 
       &\num{21542}\\
        \hline
        57&55&20&2&$\sim\num{287}\cdot10^5$ 
        & \num{59041}& $\sim\num{182}\cdot10^5$ 
        & \num{58341}\\
        \hline
   \end{tabular}
   \caption{Amount of coefficients of $\mathbb F_q$ to store, where $pk$ and $sk$ refer to the original system, and $pk_{r}$ and $sk_{r}$ refer to the reduced system with $A_1$ with a zero block.}
   \label{tab:sizekeys}
\end{table}
\end{remark}

\subsection{Using linear structures to analyse $A_2$}
In this section, we present an analysis using linear structures (see Definition \ref{def:lin.struct} below for the precise definition) of some components of $G_{pub}$.
We outline the core idea and explain how, under suitable assumptions, this can be used to recover partial information on the affine transformation $A_2$.
However, at this stage, this analysis does not appear sufficient to fully determine $A_2$ and, consequently, does not seem to compromise the scheme.

The core idea of this analysis is the following.
\begin{enumerate}
    \item $G_{pub}$ has (at least) $q^t-1$ quadratic components with (at least) $q^t$ linear structures in common. Specifically, they are the linear combinations of the first $t$ coordinates of $G\circ A_2$.
    Indeed, for quadratic functions of the form $x_i+\q_i(y)$, and their linear combinations, there are at least $q^t$ linear structures common to all these functions.
    \item If we are able to
isolate these components in $G_{pub}$ (corresponding to the mentioned components of $G$), we take $t$ of them (linearly independent) and then we determine the intersection of the linear structures on these $t$  components of $G_{pub}$, which contains a $t$-dimensional vector space $V$.
\item 
If we are able to identify this subspace, we know that the linear part of $A_2$ maps $V$ into $\F t$. So, we can recover some partial information on this linear part.
Specifically, we know that in $V$ there are $t$ linearly independent elements which form the first $t$ columns of the inverse of the linear part of $A_2$.
\end{enumerate}

We present now in more detail the analysis roughly explained above step by step. We start by recalling the definition and some useful properties of linear structures.

\begin{definition}\label{def:lin.struct}
    Let $f:\F n\rightarrow \fF$ be a function. We say that $a\in\F n$ is a {\em linear structure} of $f$ if the derivative $D_af(x):=f(x+a)-f(x)$ is constant.
\end{definition}

\begin{lemma}
The set of linear structures of a function $f$ forms a vector subspace of $\F n$.
\end{lemma}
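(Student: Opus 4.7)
The plan is a direct verification of the vector-subspace axioms for the set
\[
L := \{ a \in \F n : D_a f \text{ is constant}\}.
\]
Containment of the zero vector is immediate, since $D_0 f(x) = f(x)-f(x) = 0$ is the constant zero function.

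The key step is closure under addition. I would establish it via the telescoping decomposition
\[
D_{a+b}f(x) = \bigl[f(x+a+b)-f(x+b)\bigr] + \bigl[f(x+b)-f(x)\bigr] = (D_a f)(x+b) + (D_b f)(x).
\]
If $a, b \in L$ with $D_a f \equiv c_a$ and $D_b f \equiv c_b$ constant in $\fq$, then the right-hand side equals $c_a + c_b$ independently of $x$, so $D_{a+b} f$ is constant and $a+b \in L$.

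Closure under scalar multiplication by $\lambda \in \fq$ then follows by iterating the additive step: writing $\lambda a = a + \cdots + a$ (for $\lambda$ a positive integer in the prime subfield) and applying closure under addition gives $\lambda a \in L$, while the general case reduces to the same telescoping identity applied to sums of the form $\mu a + \nu a$.

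There is no real obstacle here: the entire argument rests on spotting the telescoping identity for $D_{a+b}f$, and everything else is routine bookkeeping.
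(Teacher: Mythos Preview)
Your argument for closure under addition is correct and matches the paper's telescoping identity exactly. The gap is in the scalar-closure step. Writing $\lambda a$ as a repeated sum $a + \cdots + a$ only makes sense when $\lambda$ lies in the prime subfield $\mathbb{F}_p \subseteq \fq$; for a general scalar $\lambda \in \fq$ with $q = p^k$ and $k > 1$, there is no way to express $\lambda a$ as an integer multiple of $a$, and your appeal to ``sums of the form $\mu a + \nu a$'' cannot bootstrap beyond the prime subfield: that identity only shows $(\mu+\nu)a \in L$ once you already know $\mu a, \nu a \in L$. As written, your proof therefore establishes only that $L$ is an $\mathbb{F}_p$-subspace of $\F n$, not an $\fq$-subspace.

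The paper handles scalar closure by a genuinely different device. After a linear change of coordinates sending $a$ to $e_1$, it argues from the reduced polynomial representation (degree $<q$ in each variable) that the transformed function $g := f \circ L$ must take the shape $g(x_1,\ldots,x_n)=\alpha x_1 + h(x_2,\ldots,x_n)$, from which $D_{\tau e_1} g \equiv \tau\alpha$ is visibly constant for every $\tau \in \fq$. This structural step---passing to the ANF and reading off the dependence on $x_1$---is precisely what your proposal is missing; the telescoping identity alone does not suffice when $q$ is not prime.
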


 \begin{proof}
  Indeed, $D_{a+b}f(x)=f(x+a+b)-f(x)=f(x+a+b)-f(x+a)+f(x+a)-f(x)=D_bf(x+a)+D_af(x)$.
 Instead, if $a$ is a linear structure of $f$, to show that $\tau a$ is also a linear structure, for any $\tau\in\fF$, we proceed as follows.
 Consider $L$ a linear bijection of $\F n$ such that $L(e_1)=a$, where $e_1$ is the first element of the canonical basis of $\F n$.
 Then, for $g=f\circ L$, $e_1$ is a linear structure of $g$ since $D_bg(x)=g(x+b)-g(x)=f(L(x)+L(b))-f(L(x))=D_{L(b)}f(L(x))$, and for $b=e_1$ the derivative is constant.
 Since we represent functions with polynomials of degrees $<q$ in each variable, this implies that $g$ is of the form
 $g(x_1,\ldots,x_n)=\alpha x_1+h(x_2,\ldots,x_n)$, for $\alpha\in\fF$ and $h:\F{n-1}\rightarrow\fF$. 
 Therefore for any $\tau\in\fF$
 $\tau e_1$ is also a linear structure of $g$, implying
 $D_{\tau e_1}g(x)=D_{L(\tau e_1)}f(L(x))$.
 Since $L(\tau e_1)=\tau L(e_1)=\tau a$, this concludes the proof.
 \end{proof}

\begin{lemma}\label{lemma:linearstructures}
    The multiset of the number of linear structures of the components of $G$ and of $G_{pub}$ is the same.
    Moreover, $a\in\F n$ is a linear structure of $\lambda\cdot G_{pub}$ if and only if $L_2(a)$ is a linear structure of ${L_1^*(\lambda)}\cdot G$
    where $L_1$ and $L_2$ are the linear parts of the affine bijections $A_1$ and $A_2$, and $L_1^*$ is the \emph{adjoint operator} of $L_1$,
that is $x\cdot L_1(y)=L_1^*(x)\cdot y$ for every $x,y$.
\end{lemma}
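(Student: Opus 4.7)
The plan is to unwind the definitions and compute the derivative $D_a(\lambda\cdot G_{pub})$ directly, then compare it with $D_{L_2(a)}(L_1^*(\lambda)\cdot G)$. Once the second (``moreover'') statement is established, the multiset statement will follow from the bijectivity of $L_2$ and of $L_1^*$.

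First, I would write $A_1(w)=L_1(w)+c_1$ and $A_2(z)=L_2(z)+c_2$ for the constant parts $c_1\in\F m$, $c_2\in\F n$. Using the definition of the adjoint, for any vector $z$ one has
\[
\lambda\cdot A_1(G(A_2(z)))=\lambda\cdot L_1(G(A_2(z)))+\lambda\cdot c_1= L_1^*(\lambda)\cdot G(A_2(z))+\lambda\cdot c_1,
\]
so $\lambda\cdot G_{pub}(z)=L_1^*(\lambda)\cdot G(A_2(z))+\lambda\cdot c_1$. The key observation for the derivative step is that an affine bijection simply translates increments, i.e.\ $A_2(z+a)-A_2(z)=L_2(a)$, so $A_2(z+a)=A_2(z)+L_2(a)$.

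With this in hand, I would compute
\[
D_a(\lambda\cdot G_{pub})(z)=L_1^*(\lambda)\cdot\bigl(G(A_2(z)+L_2(a))-G(A_2(z))\bigr)=D_{L_2(a)}\bigl(L_1^*(\lambda)\cdot G\bigr)(A_2(z)).
\]
Since $A_2$ is a bijection, as $z$ varies over $\F n$ the point $A_2(z)$ varies over $\F n$ as well. Therefore $D_a(\lambda\cdot G_{pub})$ is constant in $z$ if and only if $D_{L_2(a)}(L_1^*(\lambda)\cdot G)$ is constant, which is exactly the ``moreover'' claim.

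For the multiset statement, fix $\lambda\in\F m$ and set $\mu=L_1^*(\lambda)$. The map $a\mapsto L_2(a)$ is a linear bijection of $\F n$, and by the previous paragraph it restricts to a bijection between the space of linear structures of $\lambda\cdot G_{pub}$ and that of $\mu\cdot G$. Hence these two spaces have the same cardinality. Since $L_1$ is invertible, so is $L_1^*$, and therefore the assignment $\lambda\mapsto L_1^*(\lambda)$ permutes $\F m$. Running $\lambda$ over all of $\F m$ thus produces exactly the same collection of cardinalities as running $\mu$ over $\F m$, proving the equality of multisets. I do not anticipate any real obstacle here; the only delicate point is being careful that the constants $c_1,c_2$ disappear when taking a first difference, and that the adjoint is applied to the right argument.
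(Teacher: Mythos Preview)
Your proof is correct and follows essentially the same route as the paper: compute $D_a(\lambda\cdot G_{pub})$ directly, use $A_2(z+a)=A_2(z)+L_2(a)$ to turn it into $L_1^*(\lambda)\cdot D_{L_2(a)}G(A_2(z))$, and conclude by bijectivity of $A_2$, $L_2$, and $L_1^*$. If anything, you are slightly more explicit than the paper in deducing the multiset statement from the ``moreover'' part.
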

\begin{proof}
    It is known that functions in a given EA-class  have the same number of components with the same number of linear structures.
    Indeed, for each nonzero $\lambda\in\F m$ there exists a nonzero $\gamma\in\F m$ such that
    $\lambda\cdot G_{pub}(x)=\gamma\cdot G(A_2(x))$. Then  clearly the number of linear structures of $\lambda\cdot G_{pub}$ equals the number of linear structures of $\gamma\cdot G$.\\
    The second statement can be deduced from the following.
    From the definition of $G_{pub}$ and since $A_1$ is affine we have
    $D_a(\lambda\cdot G_{pub})(x)=\lambda\cdot G_{pub}(x+a)-\lambda\cdot G_{pub}(x)=\lambda\cdot A_1\circ G\circ A_2(x+a)-\lambda\cdot A_1\circ G\circ A_2(x)=\lambda\cdot L_1\circ G\circ A_2(x+a)-\lambda\cdot L_1\circ G\circ A_2(x)$.
    Then $D_a(\lambda\cdot G_{pub})(x)=\lambda\cdot L_1(G(A_2(x)+L_2(a))-G(A_2(x)))=\lambda\cdot L_1(D_{L_2(a)}G(A_2(x)))=L_1^*(\lambda)\cdot D_{L_2(a)}G(A_2(x))$.
\end{proof}
Now we explain the three steps.

\underline{Step 1}.
By Lemma~\ref{lemma:linearstructures} we can directly count the number of linear structures for the components of $G$ instead of $G_{pub}$.
Notice that for $\lambda=(\lambda',0)\in\F t\times\F{m-t}$ with any nonzero $\lambda'\in\F t$, the $\lambda$-component of $G$ has the form $G_\lambda(x,y)=\lambda'\cdot (x-\q(y))$.
Then, picking any  element of the form $a=(a',0)\in\F t\times\F{n-t}$,
it holds $G_\lambda(x+a',y)-G_\lambda(x,y)=\lambda'\cdot(x+a'-\q(y)-(x-\q(y)))=\lambda'\cdot a'$.
So, $a$ is a linear structure of the function $G_\lambda$ for any such $\lambda$. 
The number of $a$ of this form is 
  $q^t$ and the number of $\lambda$ of this form is $q^t-1$.
Therefore, 
$G_{pub}$ admits at least $q^t-1$ nonzero components having at least $q^t$ linear structures in common. These common linear structures form a $t$-dimensional vector space $V$.

\underline{Step 2}. To determine $V$, an attacker has to do the following:
\begin{itemize}
    \item isolate the components in $G_{pub}$ corresponding to the $\lambda$-components of $G$ (with $\lambda=(\lambda',0)\in\F t\times\F{m-t}$, for any $\lambda'\in\F t$);
    \item among the isolated components, select $t$ linearly independent, compute the intersection of the linear structures of these components and recover the $t$-dimensional vector space $V$.
\end{itemize}
The desired components of $G_{pub}$ are quadratic; therefore by isolating all the quadratic components of the public function, we are isolating the desired components plus (eventually) the quadratic components of $U(x-\q(y),y)$.
Recall that in all the computational experiments conducted, 
$U(x-\q(y),y)$  did not have components of degree 2.

\underline{Step 3}. We partially recover the linear part of $A_2$ in the following way.
If we are able to determine 
 $V$, then we know that $L_2(V)=\F t\times \{0_{n-t}\}$, where we used the notation $\F t\times \{0_{n-t}\}=\{ (a_1,\ldots,a_t,0,\ldots,0)\in\F n : a_i\in\fF, 1\le i\le t\}$. 
 Equivalently, $V=L_2^{-1}(\F{t}\times \{0_{n-t}\})$.
We observe that if $t=1$, then we can recover the first column of $L_2^{-1}$, up to a multiplication by a nonzero element of $\fF$.
On the other hand, for larger values of $t$, we only know that there exist $t$ linearly independent elements in $V$ which form the first $t$ columns of $L_2^{-1}$ (corresponding to the evaluation of the first $t$ elements of the canonical basis), but finding the precise value of the columns is still not easy.

\subsection{A linearization attack for $s=0$}
We present here an attack which follows the idea of the linearization attack proposed against the MI cryptosystem, see \cite{MI}.
This attack of Patarin \cite{Pat95} works because there exists a bilinear equation in the input $\mathfrak i$ and in the output $\mathfrak o$ of the public map, namely $B(\mathfrak i,\mathfrak o)=0$. This allows first to reconstruct the map $B$ with a relatively small number of input-output pairs, and then given a targeted output, to recover the corresponding input.

First, we consider a \texttt{Pesto} scheme with  $t=1$ (so $x=(x_1)$), $s=0$ and such that $U(x,y)=U(x_1,y)$ does not have  the quadratic term $x_1^2$.
In this case we have $U(x_1,y)=x_1\alpha(y)+\beta(y)$, with $\alpha,\beta:\F{n-1}\rightarrow\F{m-1}$ affine maps.
Hence, for $G(x_1,y)=(c_T,c_U)$ it holds that 
\[
    c_U=U(x_1-\q(y),y)=U(c_T,y)=c_T\alpha(y)+\beta(y).
\]
The above is a bilinear equation in the input $y$ and the output $(c_T,c_U)$ of $G$, implying the existence of a bilinear equation in the input and the output of $G_{pub}$, $B(\mathfrak i,\mathfrak o)=0$.

Now, we consider a generic OV map $U$ with $s=0$ and $t=1$, then in $U$ we  have also the term $x_1^2\delta$, with $\delta\in\F{m-1}$.
Thus we have
$c_U=c_T\alpha(y)+\beta(y)+c_T^2\delta,$
implying that, in order to reconstruct the map for $G$, and so for $G_{pub}$, we need more pairs of (input,output) since the output appears also in quadratic terms.
However, once we have reconstructed the equation $B(\mathfrak i,\mathfrak o)=0$ for $G_{pub}$, we have that given a possible output $\Bar{\mathfrak o}$, the equation $B(\mathfrak i,\Bar{\mathfrak o})=0$ is linear in the input.

This same analysis can be generalized to the case $t>1$ and $s=0$.
Indeed, the map $U$ results to have terms of the form $x_ix_j$, $x_iy_j$, $x_i$, $y_j$.
So, given $G(x,y)=(c_T,c_U)$ we have $m-t$ quadratic equations in $y,c_T,c_U$, with the variables in $y$ appearing only in degree 1.
Hence, also from $G_{pub}$ it is possible to recover an equation $B(\mathfrak i,\mathfrak o)=0$ which is quadratic but with the variables of the input appearing only in degree 1.
Hence, the same attack can be performed.
 On the other hand, if $s>0$, the attack in its current form cannot be performed because the resulting relation $c_U=U(c_T,y)$ will not be linear in the input variables $y$. 
For this reason, we choose $s>0$ in Definition \ref{def:Pesto}.

\subsection{Known attacks on Oil and Vinegar systems}\label{subsec:knownUOVattacks}

The secret map $F$ is formed by two OV systems $T(x,y)$ and $U(x,y)$.
So one may wonder whether the known attacks to OV systems can be performed also to the scheme \texttt{Pesto}.

After the twisting transformation is applied, the first $t$ coordinates (of the map $G$) remain an OV system $x-\q(y)$, while the last $m-t$ coordinates might increase up to degree 4.
So, if one is able to isolate from $G_{pub}$ the components corresponding to $x-\q(y)$, known attacks to OV systems can be applied to these components only.
Therefore, given the well-known  Kipnis–Shamir  attack \cite{KS98}  on  balanced Oil
and Vinegar signature schemes, we recommend to keep the system $x-\q(y)$ unbalanced by setting, for example, $t\approx n/3$.

On the other hand, since the second part of the system $G$ is formed by polynomials of degree up to 4,
we believe that the usual OV attacks cannot be applied directly to the whole map $G_{pub}$.

\subsection{Algebraic attack with Gr\"obner bases}\label{sec:algebraicattack}
In this section, we consider algebraic attacks using Gröbner bases.
The scenario is that an attacker wants to forge a signature, hence to find a preimage of a random element.
We consider a random value $w\in\F m$ (the hash of a document), the goal is to find $v\in\F n$ such that $G_{pub}(v)=w$. 
We can do this by finding a Gr\"obner basis of the polynomial system $G_{pub}$ by the usual strategy (see e.g. \cite{CaminataGorla2021}) with linear-algebra-based-algorithms such as F4 \cite{F4paper}, F5 \cite{F5paper}, XL \cite{XLpaper}, etc.
For a large class of these algorithms, the complexity can be  bounded from above by
\begin{equation}\label{eqGBcomplexity}
O\left(\binom{n+\sd(G_{pub})}{n}^\omega\right)
\end{equation}
where $n$ is the number of variables, $\sd(G_{pub})$ is the solving degree, and $2<\omega<3$. 
In a nutshell, the solving degree represents the highest degree of polynomials that need to be considered during the process of solving the system \cite{CaminataGorla2023}.
In order to estimate this for the system $G_{pub}$ of \texttt{Pesto}, we performed some computational experiments with MAGMA software \cite{magma} for different values of the parameters $n,m,t,s,q$.
For each set of parameters, we randomly generated $50$ systems and report the mean value of the solving degree.
The results are summarized in Table \ref{tab:degrees2}.

\begin{table}[h]
    \centering
\begin{tabular}{|c|c|c|c||c|c|c|}
    \hline
    \multirow{2}{1em}{$n$} &\multirow{2}{1em}{$m$}&\multirow{2}{1em}{$t$}&\multirow{2}{1em}{$s$}&\multicolumn{3}{|c|}{$ \sd(G_{pub})$} \\ \cline{5-7}
      &  &  &  &    $q=2^6$ & $q=2^8$ & $q=3761$\\
    \hline\hline
      7 & 5 & 2 & 2 & 6.00 &  6.00&  6.00 \\
    \hline
     7 & 6 & 2 & 1 & 6.00 &  6.00 &  6.00\\
    \hline
     8 & 5 & 3 & 3 & 7.00  & 7.00 & 7.00 \\
    \hline
     8 & 5 & 3 & 2 & 6.94  & 6.98  & 7.00 \\
    \hline
     8 & 6 & 3 & 2 & 15.02 &  15.02 & 15.00 \\
    \hline
     9 & 6 & 3 & 3 & 7.94 & 8.00& 8.00 \\
    \hline
     9 & 7 & 3 & 2 & 23.04 &  23.02 &23.00\\
    \hline
     10 & 6 & 3 & 3 & 8.00& 8.00& 8.00\\
    \hline
     10 & 7 & 3 & 2 & 7.00& 7.00& 7.00\\
    \hline
     10 & 8 & 3 & 1 & 13.00  & 13.00& 13.02 \\
    \hline
     10 & 9 & 2 & 1 & 9.00 & 9.00 & 9.00\\
    \hline
\end{tabular} 
    \caption{Mean values of the solving degree for 50 randomly generated systems.}
    \label{tab:degrees2}
\end{table}

\begin{remark}
We also conducted similar experiments for smaller field sizes, specifically for $q \leq 5$. As expected, in this scenario, it is advantageous to add the field equations $x_i^q - x_i = 0$ and $y_i^q - y_i = 0$ to the system $G_{pub}$ when computing the Gr\"obner basis. In several cases, this approach results in a lower solving degree compared to those reported in Table~\ref{tab:degrees2}.
However, this strategy becomes inefficient for larger values of $q$. Therefore, we recommend implementing the scheme over larger fields, specifically choosing $q \geq 2^6$. Moreover, since our goal is to analyse how the solving degree of our scheme behaves for different values of $q$ (including large $q$), we believe that studying the system without the field equations will offer better insight into how the solving degree scales.
\end{remark}

\subsection{Parameter proposals}\label{sec:parameters}
Taking into account all potential attacks explored so far, as well as the key size analysis in~\S\ref{sec:computationalrmks}, we propose specific parameter choices for the signature scheme \texttt{Pesto} from Definition~\ref{def:Pesto}. Based on our analysis, we set $t \approx \frac{n}{3}$ and $s = n - m = 2$. Under these assumptions, the previously outlined structural attacks do not appear to be applicable. In this scenario, the most effective attack remains the direct algebraic attack using Gr\"obner bases.
To mitigate this, we choose a large field size ($q =2^6$) to make adding field equations to the system computationally infeasible. As shown in Table~\ref{tab:degrees2}, the solving degree of the public system appears to grow rapidly and linearly with $n$, likely due to the presence of degree 4 equations in $G_{pub}$. Based on this observation, we conjecture possible estimates for the solving degree with larger parameters and use them to assess the complexity of a Gr\"obner basis attack, applying \eqref{eqGBcomplexity} with $\omega = 2.3$.

We then select the remaining parameters $(n, m, t)$ to align with NIST security levels for post-quantum signature schemes and compute the corresponding public and secret key sizes using the formulas from Remark~\ref{rmk:A1 reduced}. For comparison, we evaluate these key sizes against those of the UOV signature scheme \cite{UOVScheme} (without compression), which has been submitted to NIST’s Call for Additional Digital Signature Schemes in the PQC Standardization Process. The results are summarized in Table~\ref{tab:params}.

\begin{table}[h]
    \centering
\resizebox{\textwidth}{!}{ \begin{tabular}{|c||c|c|c|c|c|c|}
    \hline
       \makecell{Security \\level } & \makecell{Parameters \\$(\mathbb{F}_q ,n,m,t)$} &  \makecell{Estimated \\ $\sd(G_{pub})$} &  \makecell{\texttt{Pesto} PK \\ size (MB)} & \makecell{\texttt{Pesto} SK \\ size (MB)} &\makecell{UOV public \\ key  size (MB)} &\makecell{UOV secret \\ key  size (MB)} \\
        \hline\hline
     NIST I  &$(\mathbb{F}_{2^6}, 27,25,10)$ & $\geq 33$ & 357  & 5.5& 278.5 &237.9\\
     \hline
     NIST III &$(\mathbb{F}_{2^6}, 40,38,14)$ & $\geq48$& 2453 & 16.5&1225.4 &1044.3\\
     \hline
     NIST V  & $(\mathbb{F}_{2^6},57,55,20)$ & $\geq 59$ &13725 & 43.8 &2869.4 &2436.7 \\
     \hline
    \end{tabular}}
    \caption{Parameter proposals for the signature scheme \texttt{Pesto} of Definition~\ref{def:Pesto}. }
    \label{tab:params}
\end{table}

\begin{remark}
We would like to emphasize that the parameters in Table~\ref{tab:params} are merely indicative of how the key sizes of \texttt{Pesto} might grow. In particular, they highlight that there appears to be a clear advantage in the secret key size, while the public key size increases rapidly, which could make the scheme more suitable for lower security levels. However, at this stage of the proposal, it remains uncertain whether our assumptions about how the solving degree scales with the parameters are correct. We therefore encourage further cryptanalysis and additional study of the scheme to better assess its security.
\end{remark}

\section{Challenges and open problems}
In this work, we propose applying a CCZ transformation in the construction of a multivariate scheme, as an alternative to the usual affine transformation. This offers the advantage of concealing linear relationships between the input and output that would typically arise with an affine transformation alone.
To better evaluate the stability and feasibility of this approach, we have identified a series of open problems and questions relevant to future cryptanalysis.

\begin{enumerate}
    \item How does the solving degree grow with respect to the input parameters? Due to the limited set of experiments we were able to conduct, predicting the behavior of the solving degree of the public polynomial system remains challenging. Based on our experiments, it appears that the solving degree scales linearly with the number of variables $n$. However, further investigation and, hopefully, a mathematical proof are desirable.

    \item Is it possible to combine the CCZ construction with other known multivariate primitives? More specifically, can the map $U$ be chosen using HFE polynomials? How does the resulting scheme compare to the \texttt{Pesto} scheme of Definition~\ref{def:Pesto}?

    \item None of the attacks presented in Section~\ref{sec:securityanalysis} appear to fully compromise the scheme. However, we believe that further cryptanalysis is necessary to better assess the security of the scheme. In particular, can classical attacks on Oil and Vinegar schemes be adapted to the \texttt{Pesto} scheme?

    \item The public key consists of degree 4 polynomials and, as a result, appears to be quite large. At present, this is the main practical disadvantage of the scheme. Could techniques used for key size reduction in OV schemes, such as the QR technique from \cite{FIKT21}, be adapted to reduce the key size?
    
\end{enumerate}

As the previous problems highlight, we see significant potential for further exploration in this direction, which we hope could build a fruitful bridge between the areas of cryptographic Boolean functions and multivariate cryptography. Since both fields involve functions and polynomials defined over finite fields, we believe that many techniques used in one area could be studied and applied in the other.

\bibliography{biblio}

\end{document}